\documentclass{article}

\usepackage{microtype}
\usepackage{graphicx}
\usepackage{subcaption}
\usepackage{booktabs} 
\usepackage{xspace}
\usepackage{enumitem}
\usepackage{placeins}

\usepackage[noend]{algpseudocode}

\usepackage{hyperref}

\usepackage[preprint]{icml2026}

\usepackage{amsmath}
\usepackage{amssymb}
\usepackage{mathtools}
\usepackage{amsthm}

\usepackage[capitalize,noabbrev]{cleveref}

\theoremstyle{plain}
\newtheorem{theorem}{Theorem}[section]
\newtheorem{proposition}[theorem]{Proposition}

\theoremstyle{definition}

\theoremstyle{remark}

\usepackage[textsize=tiny]{todonotes}

\icmltitlerunning{Transformer for discrete fair division}

\definecolor{attn}{RGB}{77,136,255}
\definecolor{normc}{RGB}{255,179,71}
\definecolor{gluc}{RGB}{102,204,153}
\definecolor{proj}{RGB}{186,145,255}
\definecolor{outc}{RGB}{255,102,102}
\definecolor{aux}{RGB}{160,160,160}

\newcommand{\FF}{\textsc{FairFormer}\xspace}
\newcommand{\Items}{\mathit{\Omega}}

\newcommand{\SAttn}{\textbf{FFSelfAttn}\xspace}
\newcommand{\CAttn}{\textbf{FFCrossAttn}\xspace}

\newcommand{\F}{\texttt{false}}
\newcommand{\T}{\texttt{true}}
\newcommand{\UW}{\texttt{UW}\xspace}
\newcommand{\NW}{\texttt{NW}\xspace}

\newcommand{\EFoneR}{\textsc{EF1-Quick-Repair}\xspace}

\newcommand{\unif}{\mathbb{U}}
\newcommand{\norm}{\mathrm{\textsc{Norm}}}

\begin{document}

\twocolumn[
  \icmltitle{\FF: A transformer architecture for discrete fair division}



  \icmlsetsymbol{equal}{*}

  \begin{icmlauthorlist}
    \icmlauthor{Chris Mascioli}{yyy}
    \icmlauthor{Satyam Goyal}{yyy}
    \icmlauthor{Mithun Chakraborty}{yyy}
  \end{icmlauthorlist}

  \icmlaffiliation{yyy}{Department of Computer Science, University of Michigan, United States of America}

  \icmlcorrespondingauthor{Chris Mascioli}{cmasciol@umich.edu}

  \icmlkeywords{Machine Learning, ICML}

  \vskip 0.3in
]



\printAffiliationsAndNotice{}  

\begin{abstract}
We propose a deep neural network-based solution to the problem of 
allocating indivisible goods under additive subjective valuations without monetary transfers, trading off economic efficiency with envy-based fairness. 
We introduce \FF, an amortized, permutation-equivariant two-tower transformer that encodes items and agents as unordered token sets, applies self-attention within each set, and uses item-to-agent cross-attention to produce per-item assignment distributions in a single forward pass.
\FF is trained end-to-end to maximize expected log-Nash welfare on sampled instances, requiring no solver supervision, unrolled allocation procedures, or fairness labels.
At test time, we discretize by row-wise $\arg\max$ and apply a lightweight post-processing routine that transfers items to eliminate violations of envy-freeness up to one item while prioritizing improvements in  Nash welfare.
Our approach generalizes beyond its training regime and achieves near-optimal welfare (e.g., for uniformly sampled valuations, $96$--$97\%$ for Nash welfare; $95$--$96\%$ for utilitarian welfare), outperforming strong baselines in solution quality and/or runtime.
\end{abstract}

\section{Introduction}
The fair distribution of subjectively valued scarce resources is an age-old problem of widespread importance \cite{brams1996fair}. Any rigorous solution approach must address questions on multiple levels: how to operationalize the normative concept of fairness; whether various fairness criteria are feasible individually, jointly with each other, and/or in conjunction with economic efficiency; and, if yes, whether we can efficiently compute allocations satisfying them. These questions become harder when the task is to allocate a collection of \textit{indivisible} items \cite{BouveretChMa16} such as non-shareable facilities, resource usage time-slots, or artworks, especially when monetary transfers and substitutions are prohibited or undesirable (e.g., inheritance, divorce settlements, distribution of donated food items among local food pantries, etc.). 
This task, often called \textit{discrete fair division} \cite{amanatidis2023fair}, has been the subject of 
active cross-disciplinary research  for a couple of decades  \cite{LiptonMaMo04,Budish11,gourves12014near,caragiannis2019unreasonable,amanatidis2021maximum}. 

Theoretical work on (discrete) fair division has produced a rich array of axioms and objectives that capture various perspectives on fairness. 
An important family of approaches is based on maximizing a \textit{(social) welfare function} \cite{sen2018collective}, a mapping of the vector of all beneficiaries' realized utilities under an allocation to a score that allows us to rank allocations. A prominent member of this family is the \textit{maximum Nash welfare} (MNW) principle that picks an allocation with the highest product (equivalently, geometric mean) of utilities. 
This MNW problem is NP-hard and, in fact, APX-hard \cite{akrami2022maximizing}; it has spawned extensive research on approximation and tractable special cases \cite{cole2015approximating,cole2017convex,barman2018finding,barman2018greedy}.

Another set of approaches involves formulating fairness constraints from scratch and devising allocation algorithms to satisfy them or their relaxations. This has resulted in concepts such as envy, proportionality, maximin share, etc.; see the survey by \citet{amanatidis2023fair} for details. We will focus only on the envy-based fairness paradigm.  An allocation is \textit{envy-free} (EF) if no agent values its own share in the allocation less than any other agent's share. Envy-freeness cannot be guaranteed for indivisible items (think of one valuable item and two claimants). EF1 \citep{Budish11,LiptonMaMo04} and EFX \citep{caragiannis2019unreasonable} are important relaxations of EF which stipulate that envy can be eliminated by hypothetically removing at least one item and any item respectively from an envied agent's allocated bundle of items. The existence of EFX in general is an open question but, as long as agents' valuations are monotone, an EF1 allocation always exists   and can be efficiently computed by the classic \textit{envy cycle elimination} algorithm \citep{LiptonMaMo04}. If valuations are additive, then an EF1 allocation can be obtained by a simple and practical \textit{round-robin picking sequence protocol} \cite{caragiannis2019unreasonable}, which allows each agent to pick its most preferred remaining item in each turn. However, these EF1-ensuring approaches do not provide any guarantees on welfare or efficiency. An interesting connection between the welfare-based and envy-based paradigms was established by \citet{caragiannis2019unreasonable} who proved that an MNW allocation under additive valuations is not only Pareto optimal but also EF1. An approximate solution to the MNW problem is not known to provide any guarantees on envy relaxation even in approximation \cite{amanatidis2023fair} but can still serve as a reasonable guiding principle for discrete fair division.


There is a nascent literature on \textit{learning} allocation rules that produce good solutions in one shot. \citet{mishra2022eef1} propose EEF1‑NN that utilizes a U‑Net–style \cite{ronneberger2015unet} architecture and a Lagrangian training objective to target utilitarian welfare subject to EF1, reporting high EF1 rates and near‑optimal welfare on synthetic data. Closer in spirit to algorithmic mechanisms, \citet{maruo2024learning} parameterize the classic round‑robin protocol with a differentiable relaxation, learning the agent order while guaranteeing EF1 by construction. We significantly push the boundaries of this research area by devising and systematically assessing a novel indivisible goods allocator under additive valuations that builds on the powerful transformer framework \cite{vaswani2017attention}.



\subsection{Our contributions}
We propose \FF(Fig.~\ref{fig:fairformer}; Section~\ref{sec:fairformer}), a symmetry-aware transformer that takes as input a discrete fair division instance represented by a valuation matrix and generates a complete allocation matrix in a single forward pass. It utilizes a set-structured design that treats agents and items as unordered sets,
applies \textit{self-attention} within each set, and fuses them via \emph{item-to-agent cross-attention} before a temperature-controlled
softmax projection produces per-item assignment probabilities. The temperature schedule smoothly interpolates from fractional allocations at the start of training to near one-hot allocations at the end,
providing a differentiable path to discrete decisions without surrogate labels. The salient properties of \FF are:
\begin{itemize}[leftmargin=*]
\item \textbf{Permutation symmetry.} \FF follows the principle of \textit{exchangeability}: permuting input agent or item indices should only shuffle the corresponding dimensions of the output allocation accordingly. This is achieved by a two-tower architecture that operates on agent and item token sets without positional encodings (unlike sequential transformers), maintaining \textit{permutation-equivariance} over both agents and items. It uses only exchangeable layers \cite{hartford2018deep} and residual stacks \cite{he2016deep} with multi-head attention and gated feed-forward layers. Here, attention serves not as a temporal mechanism but as a flexible, symmetry-preserving operator capable of modeling higher-order item-agent interactions. 
This use of attention as a \textit{structural primitive} aligns the model’s inductive bias with the underlying exchangeability of fair division instances.
\item \textbf{Direct welfare optimization.} We train \FF end-to-end to maximize the expected logarithm of Nash welfare with no solver targets or oracle supervision,
no procedural unrolling, and no fairness labels. Unlike previous work in this vein, \FF uses only the raw, natural matrix representation of input instances with no pre-processing, and does not specialize to a particular procedural rule (e.g., round-robin). 
\item \textbf{Amortized allocation.} After training, \FF produces allocations for unseen valuation matrices in a single forward pass, enabling constant-time inference with respect to the number of optimization iterations, amortizing \cite{amos2023tutorial} across instance sizes.
\end{itemize}




We complete our allocation pipeline by refining the discretized output of \FF with a post-processing routine that looks for violations of the EF1 condition so that it can transfer from envied to envious agent an item yielding the best Nash welfare improvement (Section~\ref{sec:ef1repair}); this \textit{EF1 repair} heuristic, rooted in theoretical work by \citet{caragiannis2019unreasonable}, guarantees an ex-post EF1 output with (empirically) low overhead.
Finally, we experimentally evaluate our approach and compare its performance with those of select baselines from the literature on fair division data sets with diverse statistical properties and varying numbers of items and agents (Section~\ref{sec:experiments}). Our metrics gauge welfare, envy-based fairness, and runtime. \FF with EF1 repair demonstrates impressive generalization beyond its training regime and near-optimal welfare values (even for utilitarian welfare for which it was not trained), outperforming baselines on several metrics.

All relevant code is available at: \url{https://anonymous.4open.science/r/fair-allocation-transformer-34A5/README.md}



\subsection{Further related work}



Our design follows a broader line of work on learning over sets and exchangeable structures. 
Deep Sets \cite{zaheer2017deep} and Set Transformers \cite{lee2019set} model interactions within sets using permutation-invariant or -equivariant operators. The NFGTransformer \cite{liu2024nfgtransformer} extends this principle to game-theoretic settings, leveraging invariance to the ordering of actions and strategies in normal-form games. 
These architectures demonstrate that respecting symmetry specific to the data domain 
yields improved generalization and interpretability. 
For relational structures such as valuation matrices, \citet{hartford2018deep} introduced a parameter-sharing scheme enforcing row- and column-wise permutation-equivariance, which directly motivates the two-tower design of \FF.
Deep learning has also been widely applied to auction design (with payments), targeting revenue or welfare under incentive constraints \cite{tacchetti2019_auctionNN, dutting2024optimal}.  
A recurring theme in this area is to encode permutation symmetries over bidders/items into the architecture. \citet{rahme2021permutationequivariant} build a permutation-equivariant network for auctions; \citet{duan2022context} and \citet{ivanov2022optimal} propose a transformer-based network to solve these problems. We also exploit exchangeability, but in a no‑payment fair‑division setting and for a different objective (NW vs.\ revenue), with an item-to-agent cross‑attention head that maps a valuation matrix to per‑item assignment distributions in a single forward pass.


\section{Technical preliminaries}\label{sec:prelim}

An \textit{instance} of discrete fair division is a tuple $\langle N, \Items, \{v_i\}_{i \in N} \rangle$ where $N = \{1,2,\dots,n\}$ is a set of $n$ \textit{agents}, $\Items=\{o_1,o_2,\dots,o_m\}$ is a set of $m$ \textit{items}, and $v_i:2^{\Items} \to [0,\infty)$ is a function expressing the \textit{valuation} of agent $i\in N$ for each \textit{bundle} (i.e., subset of items), for positive integers $n$ and $m$.  We will focus on settings where $m\ge n$ and all valuations are additive, i.e., $\forall i \in N. \forall S \subseteq \Items.v_i(S)=\sum_{o\in S} v_i(o)$, where $v_i(o)$ is standard shorthand for $v_i(\{o\})$.
For this class, an instance can be equivalently defined by an $m\times n$ valuation matrix $V$ with entries $V_{o,i}=v_i(o)$, $i\in N$, $o \in \Items$. 

Given an instance, an \textit{allocation} $A=\{A_1,A_2,\dots,A_n\}$ is an $n$-partition of $\Items$ (i.e., $\cup_{i\in N} A_i = \Items$; $A_i \cap A_j = \emptyset$ for every $(i,j)$ with $i \neq j$),
where $A_i$ denotes the bundle received by agent $i$. We call $v_i(A_i)$ the \textit{realized} valuation of agent $i$ under allocation $A$. We will represent an allocation $A$ as an $m\times n$ binary matrix with one-hot row vectors: the entries are $A_{o,i}\in\{0,1\}$, $i\in N$, $o\in \Items$, such that, for every $o$, $A_{o,i} = 1$ if and only if item $o$ is assigned to agent $i$. As part of our transformer-based solution approach, we will admit fractional allocation matrices whose entries satisfy the following criteria: $A_{o,i}\in [0,1]$, $i\in N$, $o\in \Items$, and $\sum_{i=1}^n A_{o,i} = 1$ for every $o$. An agent $i$'s bundle $A_i$ under any (potentially fractional) allocation $A$ is identified with the $i^\mathrm{th}$ column of the matrix $A$, and agent $i$'s (additive) valuation of any agent $j$'s bundle is $v_i(A_j)=\sum_{k=1}^m V_{o_k,i} A_{o_k,j}$.

\subsection{Efficiency and fairness criteria}\label{sec:defs}
The \textit{utilitarian social welfare} $\UW$ and \textit{Nash welfare} $\NW$ of an allocation are respectively defined as the sum and geometric mean\footnote{The most rigorous definition of $\NW$  carefully accommodates instances that admit no allocations with strictly positive realized valuations for all agents, but no such instance arises in our study.} of the realized valuations of all agents:

$\UW (A) \coloneq \sum_{i \in N} v_i(A_i)$; \hfill $\NW (A) \coloneq \left(\prod_{i \in N} v_i(A_i)\right)^{1/n}$.

In an allocation $A$, agent $i$ is said to \textit{envy} agent $j$ if $v_i(A_i) < v_i(A_j)$, otherwise agent $i$ is \textit{envy-free} of agent $j$. The following are two popular relaxations of envy-freeness for integral allocations that will inform our performance metrics. Agent $i$ is \textit{envy-free} of agent $j$ \textit{up to one item} if $i$ envies $j$ but there is an item $o \in A_j$ such that $v_i(A_i) \ge v_i(A_j \setminus \{o\})$. 
Under additive valuations, this criterion reduces to the following inequality, which places an upper bound on the envy of $i$ toward $j$:
\begin{align}
    \sum_{k=1}^m V_{o_k,i} A_{o_k,j} - \sum_{k=1}^m V_{o_k,i} A_{o_k,i} \le \max_{k: A_{o_k,j}=1} V_{o_k,i} \label{def:EF1}
\end{align}
The derivation is in App.~\ref{app:additive_EF1_EFX}. An allocation is EF1 
if, for every ordered pair of distinct agents $(i,j)$, either $i$ does not envy $j$ or $i$ is envy-free of $j$ up to one item. 

Finally, note that any tie arising among agents, ordered agent-pairs or items in any procedure we describe in the rest of the paper could in general be broken arbitrarily but consistently; we always use lexicographical tie-breaking.





\begin{figure}[t]
  \centering
  \includegraphics[width=0.8\linewidth]{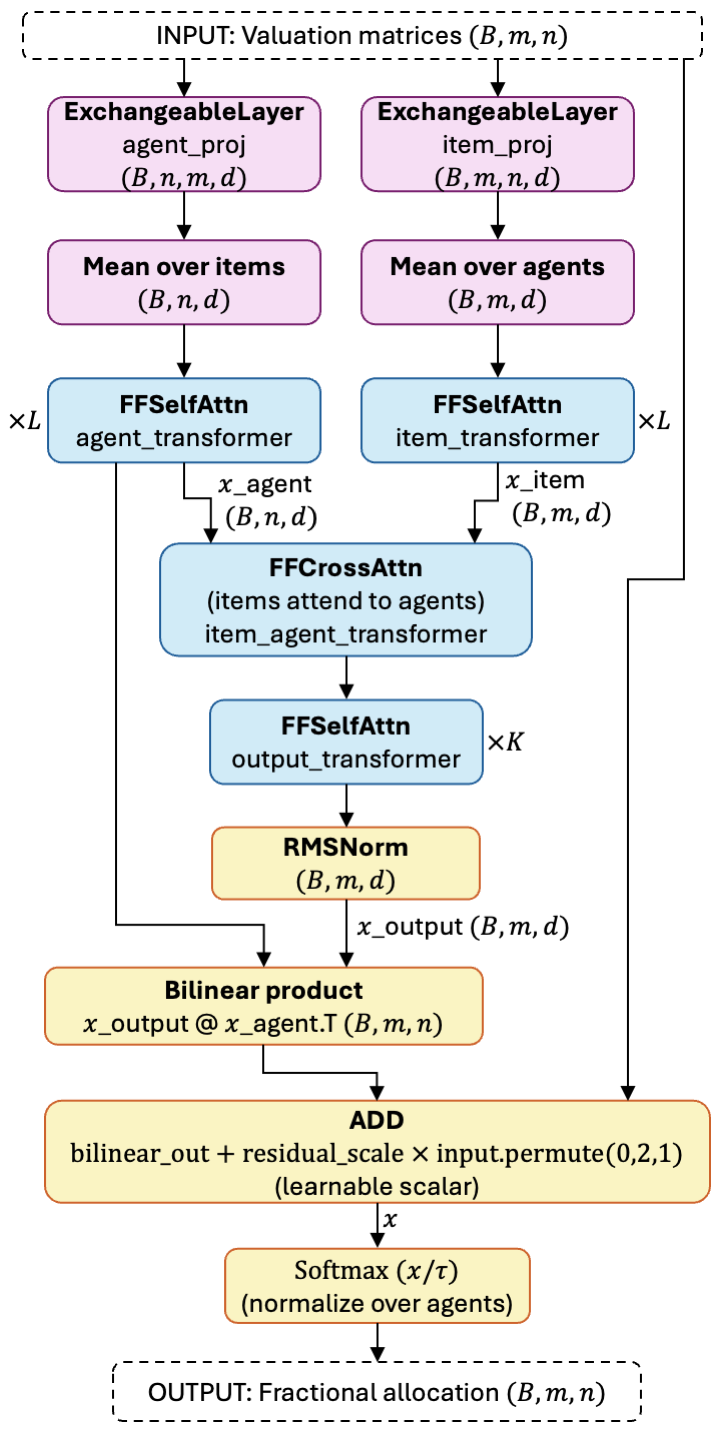}
  \caption{\FF architecture: Agents and items are embedded separately, interact via cross-attention, and produce per-item assignment distributions through a temperature-scaled softmax.}
  \label{fig:fairformer}
\end{figure}

\section{Transformer-based allocator}
We will now describe our two-stage approach to discrete fair division. The first and major component is \FF, a transformer trained to maximize Nash welfare (Section~\ref{sec:fairformer}). This produces fractional allocations, which we discretize and then subject to a post-processing routine \EFoneR (Section~\ref{sec:ef1repair}). 

\subsection{\FF}
\label{sec:fairformer}


Fig.~\ref{fig:fairformer} depicts a schematic for \FF. Given $V\in\mathbb{R}_{\ge 0}^{m\times n}$, \FF implements a mapping
$f_\theta: \mathbb{R}_{\ge 0}^{m\times n} \to (\Delta^n)^m$ in two steps, where $\theta$ represents the parameters of the neural network and $\Delta^n$ is the probability simplex representing fractional distributions of an item. 

First, we produce an \emph{item--agent score matrix} $S_\theta(V) \in \mathbb{R}^{m\times n}$,
which we interpret as a \emph{modified utility matrix}. It has the same shape as $V$
but is augmented by learned structure. The generation of $S_\theta(V)$ is guided by the design principle of permutation-equivariance over both agents and items, i.e., permuting agent indices (columns of $V$) or item indices (rows of $V$) should only induce an analogous permutation in the corresponding dimension of $S_\theta(V)$. These properties align the architecture with the symmetry of fair division instances,
yielding strong generalization to unseen $m$ and $n$ distributions. 
The architectural details are as follows.

\paragraph{Tokenization and projections.}
We represent agents and items as disjoint sets of tokens obtained by projecting the valuation matrix
along opposite axes:
\[
H_0=\phi_A(V^\top)\in\mathbb{R}^{n\times d}, \qquad
I_0=\phi_I(V)\in\mathbb{R}^{m\times d},
\]
where $d$ is the model dimension, and $\phi_A$, $\phi_I$ are learned \textit{exchangeable layers} \cite{hartford2018deep}. 

\paragraph{Building blocks.} We employ two types of attention layer, $\SAttn(X)$ for self-attention and $\CAttn(X_q, X_{kv})$ for cross-attention, where $X$ represents an arbitrary input with model dimension $d$; the subscripts $q$ and $kv$ represent query and key-value pair in the standard terminology of the attention literature \cite{vaswani2017attention}, corresponding to items and agents respectively in this context.
Each layer is a residual pre-norm \cite{xiong2020layer} block with \textit{multi-head attention} (MHA) \cite{vaswani2017attention}
and a gated linear unit (GLU) \cite{shazeer2020glu}. In other words, the output of each attention layer takes the form $Y = \hat{X} + \mathrm{GLU}(\norm(\hat{X}))$, where $\hat{X} = X + \mathrm{MHA}(\norm(X))$ for self-attention and $\hat{X} = X_q + \mathrm{MHA}(\mathrm{Norm}(X_q),\,\mathrm{Norm}(X_{kv}))$ for cross-attention. The specific $\norm$ function we adopt is the \textsc{RMSNorm} \cite{zhang2019root}. For our GLU, we use SiLU \cite{hendrycks2023gaussian} activations with intermediate width
$\tfrac{8}{3}d$. Attention is unmasked and non-causal.





\paragraph{Two-tower encoder and fusion.}
Agents and items are first encoded independently via $L$ self-attention layers each: 
\[
H_{\ell + 1}=\SAttn(H_\ell), \qquad
I_{\ell + 1}=\SAttn(I_\ell).
\]
for $\ell=0,\dots,L-1$. Cross-attention then fuses the two towers by allowing items to query agent representations:
\[
Z_0=\CAttn(I_L,H_L)\in\mathbb{R}^{m\times d}.
\]
This operation contextualizes each item embedding with respect to agent valuations,
analogous to computing a context-aware “responsibility” of agents for each item.

\paragraph{Allocation refinement.}
To enable global consistency among items (e.g., discouraging duplication or conflicting assignments),
we stack $K$ additional item-wise self-attention layers $Z_{\ell + 1}=\SAttn(Z_{\ell})$, $\ell = 1,\dots,K$, followed by normalization to obtain $\tilde{Z}\in\mathbb{R}^{m\times d}$.

\paragraph{Bilinear output + residual.}
For final item embeddings $\tilde Z\in\mathbb{R}^{m\times d}$ and final agent embeddings 
$H_L\in\mathbb{R}^{n\times d}$ produced by the two-tower encoder,
the model computes a bilinear compatibility matrix
\[
D ~=~ \tilde{Z} H_L^\top \in \mathbb{R}^{m\times n}.
\]
Crucially, we then add a learnable residual connection from the raw valuations to obtain the modified utility matrix:
\[
S_\theta(V) ~=~ D ~+~ \alpha V,
\]
where $\alpha\in\mathbb{R}$ is a learned scalar, the residual scale parameter. The key idea is that
$S_\theta(V)$ preserves the direct “give item to high-valuer” signal via $V$ while allowing the
network to apply structured, fairness-relevant corrections via $D$.

Finally, an allocation matrix is generated by applying a \textsc{Softmax} layer  to the modified utility matrix. The \textsc{Softmax} function operates across agents (columns) independently for each item (row) and is controlled by temperature $\tau$ which is annealed during training (see Section~\ref{sec:training}):
\[
A = \mathrm{\textsc{Softmax} }\!\left(S_\theta(V)/\tau\right) \in (\Delta^n)^m.
\]



\subsubsection{Training}
\label{sec:training}

\FF is trained end-to-end using the differentiable surrogate of
Nash welfare \cite{caragiannis2019unreasonable} as the objective on batches of valuation matrix samples of size $B$. 
For a batch of valuation matrices $\{V^r\}_{r=1}^B$
and corresponding predicted fractional allocations, i.e., model outputs $\{A^r\}_{r=1}^B$,
we seek to maximize via gradient descent the expected log-Nash welfare:
\[
\mathcal{L}_\text{NW}(\theta)
~=~
-\frac{1}{B}\sum_{r=1}^{B}
\frac{1}{n}\sum_{i=1}^{n}
\log\!\Big(
  \sum_{k=1}^m V^r_{o_k,i} A^r_{o_k,i}
\Big),
\]
This objective directly encourages allocations in which all agents
receive strictly positive and balanced values, penalizing degenerate
solutions that favor a strict subset of agents.

We anneal the \textsc{Softmax} temperature $\tau$ from an initial
$\tau_0 > 0$ to a final $\tau_T > 0$ over the training horizon $T$ to transition smoothly
from fractional per-item distributions over agents (which can be viewed as exploration) to nearly one-hot item-to-agent assignments. The permutation-equivariant structure of the architecture ensures that learning is shared across agents and items.


\subsection{\EFoneR}
\label{sec:ef1repair}
Since \FF produces a fractional allocation $A$ in general, we convert $A$ to a discrete allocation $\hat{A}\in\{0,1\}^{m\times n}$ by a deterministic row-wise $\arg\max$ operator at inference/evaluation time, i.e., assigning item $o_k$ to $\arg\max_{i \in N} A_{o_k,i}$. However, there is no guarantee that the output is EF1 after discretization.  
Hence, we invoke a post-processing routine \EFoneR (Algorithm~\ref{alg:ef1repair}) that enforces the EF1 property while improving Nash welfare. 



Given an (integral) allocation $\hat{A}$, we iterate through all ordered pairs of distinct agents $(i,j)$ in each pass of \EFoneR to check if each pair violates the EF1 condition~\eqref{def:EF1}, i.e., if 
\begin{align}
    \sum_{k=1}^m V_{o_k,i} \hat{A}_{o_k,j} - \sum_{k=1}^m V_{o_k,i} \hat{A}_{o_k,i} > \max_{k: \hat{A}_{o_k,j}=1} V_{o_k,i} \label{ineq:EF1_viol}
\end{align}
If such a violation occurs under additive valuations, the proof of Theorem 3.2 of \citet{caragiannis2019unreasonable} implies that there is an item $o\in \hat{A}_j$ that we can transfer from $j$ to $i$ such that $\NW(\hat{A}')>\NW(\hat{A})$ where $\hat{A}'$ is the updated allocation after the transfer.
Among all such items, we transfer the one maximizing the log-Nash welfare improvement, which is equivalent to maximizing the following function over items in $\hat{A}_j$:
\begin{align}
    \Delta^{\hat{A}}_{i,j}(o)
\coloneq
\log(v_i(\hat{A})+V_{o, i}) + \log(v_j(\hat{A}) - V_{o, j}) \label{def:NWimprov}
\end{align}
Since the maximum Nash welfare is finite for any fair division instance with real valuations and the Nash welfare strictly increases with every pass that transfers at least one item, this procedure must terminate. We cap the number of passes at $P$. Since each pass checks all ordered pairs and scans all items in the envied bundle, the runtime of \EFoneR is $O(P  n^2 m)$.
We provide both a reference implementation and a Numba-accelerated implementation.

\begin{algorithm}[t]
\caption{\EFoneR (post-processing)}
\label{alg:ef1repair}
\begin{algorithmic}[1]
\Require Allocation $\hat{A}\in\{0,1\}^{m\times n}$; Valuations $V\in\mathbb{R}_{\ge0}^{m\times n}$; Maximum number of passes $P$
\State \textbf{Initialize:} $p \gets 1$; $changed \gets \mathrm{\T}$
\While{$p\le P$  \textbf{and} $changed$}
  \State $p \gets p+1$; $changed \gets \mathrm{\F}$
  \For{all ordered pairs $(i,j)$ with $i\neq j$}
    \If{$\hat{A}_j\neq\emptyset$ \textbf{and}  Inequality~\eqref{ineq:EF1_viol} holds}
      \State Choose $o^\star\in \arg\max_{o \in \hat{A}_j} \Delta^{\hat{A}}_{i,j}(o)$, see \eqref{def:NWimprov}
      \State \textbf{Transfer}: $\hat{A}_{o^\star,j} \gets 0$; $\hat{A}_{o^\star,i} \gets 1$
      \State $changed \gets \mathrm{\T}$
    \EndIf
  \EndFor
\EndWhile\\
\Return $\hat{A}$
\end{algorithmic}
\end{algorithm}

\section{Evaluation experiments}\label{sec:experiments}
We will now report the salient aspects and results of our evaluation experiments. App.~\ref{app:additional_results} provides some omitted details. We conducted all training and evaluation on an RTX 4090. 

\subsection{Comparison benchmarks}
We contrast our approach with three indivisible goods allocation algorithms from the literature. 

\paragraph{Round-robin protocol (RR) \cite{caragiannis2019unreasonable}.} 
Agents $1,\dots,n$ begin with empty bundles and take turns in each round in the same order; at its turn, we assign to an agent its most-valued remaining item until we run out of items. RR guarantees an ex-post EF1 allocation for additive valuations and obviously runs in polynomial time.

\paragraph{Envy Cycle Elimination (ECE) \cite{LiptonMaMo04}.}
Again, the agents start with empty bundles. The algorithm maintains an \textit{envy graph} on agents with a directed edge from $i$ to $j$ whenever $i$ envies $j$. In each iteration, 
we assign the next remaining item to a currently unenvied agent (i.e., one which no incoming edge); if this creates a directed cycle, we eliminate it by transferring bundles from envied to envious agent on each such cycle (in the reverse direction, increasing the realized valuation of each agent on that cycle), until we get at least one unenvied agent (with indegree $0$). This algorithm guarantees an ex-post EF1 allocation for monotone valuations and is also polynomial-time.


\paragraph{MaxUtil + EF1-Repair (MU+EF1).}
This baseline is similar to the \textit{envy-induced transfers} procedure of \citet{benabbou2021finding} who applied it to a non-additive valuation class. First, for a given instance, we compute a utilitarian-optimal allocation with maximum \UW by iterating over items and assigning each item to an agent with the highest value for it; this exploits the fact that utilitarian welfare decomposes across items under additivity. Finally, we apply \EFoneR from Section~\ref{sec:ef1repair} to the above allocation.
This baseline is intentionally strong on utilitarian welfare before repair, and exposes the welfare-fairness trade-off
induced by the repair dynamics.

\subsection{Data sets and training configurations}

As our main evaluation data set, we generated a cohort of $R=10^4$ discrete fair division instances $V^r$, $r=1,2,\dots,R$ with $n$ varying within $\{10,11,\dots,60\}$, $m$ from $n$ to $60$, and all item-agent valuations being sampled independently over agents, items, and instances from $\unif[0,1]$.

On this data set, we test three \FF models that are trained on three separate sets of valuation matrices with each entry $V_{o_k,i}\sim_{\mathrm{i.i.d.}}  \unif[0, 1]$: (1) \textbf{small data regime ($10{\times}20$)}, exclusively trained on instances with $n=10, m=20$; (2) \textbf{medium data regime ($30{\times}60$)}, exclusively trained on instances with $n=30, m=60$; (3) \textbf{multi-config}, trained on many $(n, m)$ configurations (but always $m \geq n)$.
We report the hyperparameters for all three models are in appendix \ref{app:config}.

Additionally, we use for evaluation two non-uniform valuation distributions that capture agent preferences in certain realistic scenarios. 
Table~\ref{tab:distributions}  in App.~\ref{app:val_dist} summarizes the key characteristics of each distribution.                                      \paragraph{(Normalized) Pareto-Distributed Valuations.}                              For some resources such as computing jobs with varying resource demands, advertisement slot auctions, and courses with varying popularity levels, item values can follow heavy-tailed distributions, i.e., a few items are highly valued while most have moderate value. We model this by first sampling for each instance ``raw" valuations $V_{o_k,i}^{\text{raw}}$ i.i.d. from a Pareto distribution with parameter $\alpha$, i.e., with cumulative distribution function $F(x) = 1 - x^{-\alpha}$. Specifically, we use use $\alpha = 3.0$, which produces moderate heavy-tailedness.


   These raw valuations are then min-max normalized per instance:
  \begin{equation*}
      V_{o_k.i} = \frac{V_{o_k,i}^{\text{raw}} - \min_{o,j} V_{o,j}^{\text{raw}}}{\max_{o,j} V_{o,j}^{\text{raw}} - \min_{o,j} V_{o,j}^{\text{raw}}}
  \end{equation*}
  This normalization ensures valuations lie in $[0,1]$ while preserving the relative structure of heavy-tailed preferences.

  \paragraph{Correlated Valuations.}
  In some domains, agents have correlated preferences due to objective quality differences between items, 
  creating
  competition for high-quality items. To capture this, we give each item $o$ in an instance an intrinsic quality $\beta_o \sim_{\mathrm{i.i.d.}}  \unif[0, 1]$ shared across all agents, and take its convex combination with agent-specific idiosyncratic noise $\varepsilon_{o,i}\sim_{\mathrm{i.i.d.}}  \unif[0, 1]$ to obtain the item-agent valuation
  $V_{o,i} = \lambda \beta_o + (1 - \lambda) \varepsilon_{o,i}$.
  We use $\lambda = 0.5$, equally balancing common value and bias. 

For each of these two non-uniform distributions, we generate data sets with multiple $(n,m)$ configurations and employ a \FF model trained on the corresponding small data regime; we call them FF-P and FF-C respectively (see Tables~\ref{tab:pareto-results} and \ref{tab:correlated-results}).


\subsection{Metrics}
We subjected each instance to every allocator described above, and recorded the wall time as well as the number of iterations (of each applicable component) needed to generate an allocation. In particular for our proposed approach, we applied one forward pass of the relevant trained \FF model, discretized the output via row-wise $\arg\max$, and finally refined it with \EFoneR.

For each instance $V^r$, we pre-computed the maximum utilitarian and Nash welfares over all allocations, $\max_{A}\UW(A)$ and $\max_{A}\NW(A)$. For the first, we adopted the same method as for MaxUtil+EF1; 
for the second, we constructed and solved the following mixed-integer convex program using Gurobi \cite{gurobi}: 
\begin{align*}
\max & \sum_{i=1}^n \ln (\sum_{k=1}^m V^r_{o_k, i}A_{o_k,i}) \\
\text{s.t.} & \sum_{i=1}^n A_{o_k,i} = 1 \forall k =1,\dots, m,\\
& A_{o_k,i} \in \{0,1\} \forall i=1,\dots,n, \forall k =1,\dots, m
\end{align*}



For each discrete allocation $\hat{A}$ produced by a method, we compute $\UW(\hat{A})$ and $\NW(\hat{A})$
and normalize them by the instance-specific optima $\max_A \UW(A)$ and $\max_A \NW(A)$.
We report the mean normalized welfare across all $(n,m)$ configurations in \cref{tab:welfare_summary},
and analyze performance as a function of the items-per-agent ratio $m/n$ (an indicator for scarcity) in \cref{fig:welfare_ratio}.
Table~\ref{tab:welfare_summary} summarizes mean$\pm$std and min--max over the full evaluation suite. For non-uniform distributions, we report \NW only in Tables~\ref{tab:pareto-results} and \ref{tab:correlated-results}.


\subsection{Results}
\label{sec:results}


\begin{figure}[t]
  \centering
  \includegraphics[width=\columnwidth]{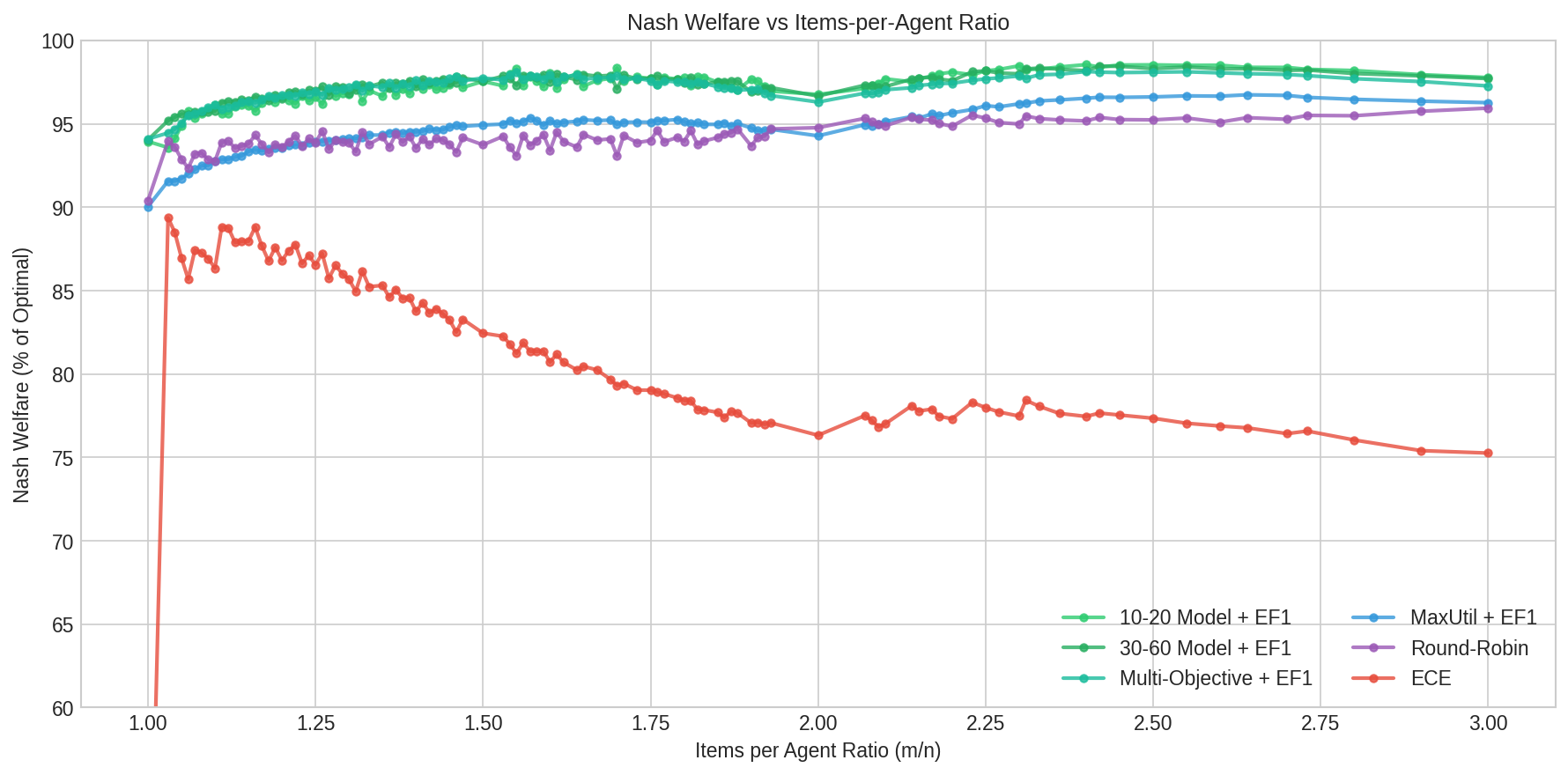}
  \caption{Nash welfare as a function of the items-per-agent ratio $m/n$. Learned models remain near-optimal across scarcity regimes, while classical EF1 procedures exhibit larger degradation, especially ECE.}
  \label{fig:welfare_ratio}
\end{figure}

We evaluate three trained \FF variants (trained on $10{\times}20$, trained on $30{\times}60$, and a multi-configuration model) against Round-Robin (RR), Envy Cycle Elimination (ECE), and a strong utilitarian baseline (MaxUtil) followed by the same EF1 repair used for \FF.
For each instance, we report utilitarian and Nash welfare \emph{as a percentage of the instance-wise optimum} (Section~\ref{sec:prelim}), so values can be interpreted as approximation ratios.
\EFoneR produced an EF1 allocation for all model and MaxUtil outputs within three passes (\cref{tab:ef1repair_passes_mean} in App.~\ref{app:additional_results}).

\begin{table*}[t]
\centering
\small
\begin{tabular}{lccccc}
\toprule
Method & Nash (\% opt) & Nash min--max & Util (\% opt) & Util min--max \\
\midrule
\FF\ ($10{\times}20$) + EF1-Repair     & $96.62 \pm 1.42$ & $[91.80, 98.61]$ & $95.57 \pm 1.70$ & $[92.02, 98.67]$ \\
\FF\ ($30{\times}60$) + EF1-Repair     & $96.81 \pm 1.18$ & $[93.29, 98.46]$ & $95.76 \pm 1.62$ & $[90.48, 98.37]$ \\
\FF\ (Multi-Config) + EF1-Repair          & $96.70 \pm 1.20$ & $[92.98, 98.15]$ & $95.88 \pm 1.60$ & $[92.70, 98.78]$ \\
MaxUtil + EF1-Repair                   & $93.89 \pm 1.71$ & $[89.28, 96.74]$ & $93.56 \pm 2.09$ & $[89.57, 98.07]$ \\
Round-Robin                     & $93.70 \pm 1.44$ & $[87.18, 95.95]$ & $92.26 \pm 1.31$ & $[87.55, 94.66]$ \\
Envy Cycle Elimination         & $79.32 \pm 12.75$& $[39.15, 90.43]$ & $78.96 \pm 9.14$ & $[51.43, 89.52]$ \\
\bottomrule
\end{tabular}
\caption{Summary statistics of welfare ratios (mean $\pm$ std; min--max) over all evaluated problem configurations.}
\label{tab:welfare_summary}
\end{table*}

\paragraph{Near-optimal welfare under EF1.}
\cref{tab:welfare_summary} summarizes performance averaged across all evaluated $(n,m)$ configurations.
All three \FF models achieve \textbf{$\approx 96.6$--$96.8\%$} of optimal Nash welfare and \textbf{$\approx 95.6$--$95.9\%$} of optimal utilitarian welfare (Table~\ref{tab:welfare_summary}).
The best Nash mean is attained by the $30{\times}60$-trained model ($96.81\%$), while the best utilitarian mean is attained by the multi-configuration model ($95.88\%$).
In contrast, MaxUtil+EF1-Repair attains $93.89\%$ Nash and $93.56\%$ utilitarian welfare, and RR attains $93.70\%$ Nash and $92.26\%$ utilitarian welfare.
ECE is substantially worse on both objectives ($79.32\%$ Nash, $78.96\%$ utilitarian) and exhibits much larger dispersion.

\paragraph{Consistency and failure modes.}
Beyond means, \FF is markedly more \emph{stable} across configurations.
Across the full evaluation suite, the worst observed Nash ratios for the learned models remain above $91.80\%$ (and above $92.98\%$ for the multi-configuration model), whereas ECE drops as low as $39.15\%$ of optimal Nash welfare (Table~\ref{tab:welfare_summary}).
A similar pattern holds for utilitarian welfare: learned models remain above $90.48\%$ of the per-instance utilitarian optimum, while ECE reaches minima of $51.43\%$.
This indicates that enforcing EF1 alone is insufficient for good efficiency: the specific allocation rule matters substantially.

\paragraph{Dependence on scarcity ($m/n$).}
\cref{fig:welfare_ratio} breaks Nash welfare down by the items-per-agent ratio $m/n$.
All \FF variants remain close to optimal across the full tested range ($m/n \in [1,3]$), with only mild degradation at the extremes.
RR and MaxUtil+EF1-Repair are consistently below the learned models across the range.
ECE degrades sharply as $m/n$ increases, explaining its low average performance and high variance.

\paragraph{Significance.}
Pairwise tests over the pooled evaluation instances show that the welfare gains of \FF over baselines are not only large but also statistically significant.
For Nash welfare, the multi-configuration \FF improves over RR by $3.00$ percentage points (95\% CI $[2.83, 3.18]$) and over MaxUtil+EF1-Repair by $2.81$ points (95\% CI $[2.72, 2.91]$), with extremely small $p$-values under both $t$-tests and Wilcoxon tests, and large standardized effects (e.g., Cohen's $d \ge 2.25$ vs.\ RR and $d \ge 3.92$ vs.\ MaxUtil+EF1-Repair).
Analogous improvements hold for utilitarian welfare ($+3.62$ points vs.\ RR and $+2.32$ points vs.\ MaxUtil+EF1-Repair).
Differences among the three \FF training regimes are measurable but very small in magnitude (sub-$0.3$ points), indicating that all three variants learn high-quality allocation rules on this distribution.

\begin{table*}[t]
  \centering
  \small
  \begin{tabular}{lrrrrr}
    \toprule
    Algorithm & $10{\times}10$ & $10{\times}15$ & $10{\times}20$ & $20{\times}40$ & $20{\times}60$ \\
    \midrule
    \FF\ (10--20)   & 414   & 946   & 1{,}088 & 1{,}402   & 2{,}825 \\
    \FF\ (30--60)   & 430   & 528   & 665     & 1{,}244   & 2{,}196 \\
    \FF\ (Multi)    & 447   & 507   & 691     & 1{,}752   & 2{,}697 \\
    Round-Robin     & 87    & 163   & 170     & 365       & 559 \\
    ECE             & 311   & 805   & 1{,}107 & 8{,}815   & 14{,}834 \\
    MaxUtil         & 0.3   & 0.4   & 0.5     & 1.5       & 2.4 \\
    \bottomrule
  \end{tabular}
  \caption{Mean per-instance runtime (in $\mu$s) for representative problem sizes.}
  \label{tab:runtime}
\end{table*}


\paragraph{Runtime.}
\cref{tab:runtime} reports mean per-instance runtime (in $\mu$s) for representative problem sizes.
A single \FF forward pass is sub-millisecond on small instances (e.g., $10{\times}10$: 414--447~$\mu$s across checkpoints)
and remains within a few milliseconds at $20{\times}60$ (2,196--2,825~$\mu$s).
Round-Robin is faster (87--559~$\mu$s) but yields systematically lower welfare in our experiments,
while Envy-Cycle Elimination scales sharply with problem size (up to 14,834~$\mu$s at $20{\times}60$).
MaxUtil is faster but does not enforce fairness constraints and remains roughly comparable to RR after EF1-repair.

  \paragraph{Non-uniform valuation distributions.}
  Standard baselines like Round Robin perform well on uniform valuations but can struggle when valuations have structure. Pareto distributions particularly challenge methods like Envy-Cycle Elimination (ECE) because the geometric mean in Nash welfare is sensitive to agents receiving low-value bundles---a common outcome when a few items dominate total value. Our experiments show that
  distribution-specific training yields models that achieve $97.9$--$99.1\%$ of optimal Nash welfare on Pareto valuations and $96.5$--$99.4\%$ on correlated valuations, significantly outperforming distribution-agnostic baselines (Tables~\ref{tab:pareto-results} and \ref{tab:correlated-results})).

 \begin{table}[t]
  \small
  \centering
  \caption{Nash welfare as percentage of optimal on Pareto($\alpha=3.0$) valuations.} 
  \label{tab:pareto-results}
  \begin{tabular}{lcccc}
  \toprule
  \textbf{Size} & \textbf{FF-P+EF1} & \textbf{MU+EF1} & \textbf{RR} & \textbf{ECE} \\
  \midrule
  $3 \times 6$   & \textbf{97.9\%} & 93.9\% & 93.1\% & 73.4\% \\
  $3 \times 9$   & \textbf{98.0\%} & 94.3\% & 93.7\% & 70.9\% \\
  $3 \times 12$  & \textbf{98.9\%} & 95.3\% & 95.4\% & 67.6\% \\
  $5 \times 10$  & \textbf{98.0\%} & 91.9\% & 88.7\% & 66.4\% \\
  $5 \times 15$  & \textbf{98.8\%} & 93.3\% & 92.7\% & 62.5\% \\
  $5 \times 20$  & \textbf{99.1\%} & 94.0\% & 94.2\% & 60.1\% \\
  $10 \times 20$ & \textbf{98.3\%} & 87.9\% & 87.6\% & 60.3\% \\
  $10 \times 30$ & \textbf{98.3\%} & 91.0\% & 92.3\% & 55.4\% \\
  \bottomrule
  \end{tabular}
\end{table}

 \begin{table}[t]
  \small
  \centering
  \caption{Nash welfare as percentage of optimal on Correlated($\lambda=0.5$) valuations.}
  \label{tab:correlated-results}
  \begin{tabular}{lcccc}
  \toprule
  \textbf{Size} & \textbf{FF-C+EF1} & \textbf{MU+EF1} & \textbf{RR} & \textbf{ECE} \\
  \midrule
  $3 \times 6$   & \textbf{98.9\%} & 98.1\% & 94.0\% & 88.5\% \\
  $3 \times 9$   & \textbf{99.1\%} & 98.5\% & 95.9\% & 87.3\% \\
  $3 \times 12$  & \textbf{99.4\%} & 98.8\% & 96.9\% & 86.2\% \\
  $5 \times 10$  & \textbf{98.5\%} & 97.0\% & 93.4\% & 85.1\% \\
  $5 \times 15$  & \textbf{98.9\%} & 98.3\% & 95.2\% & 84.2\% \\
  $5 \times 20$  & \textbf{99.0\%} & 98.3\% & 96.4\% & 83.9\% \\
  $10 \times 20$ & \textbf{97.5\%} & 96.1\% & 93.5\% & 83.1\% \\
  $10 \times 30$ & \textbf{98.3\%} & 97.8\% & 95.2\% & 82.8\% \\
  $15 \times 30$ & \textbf{97.2\%} & 95.8\% & 93.4\% & 83.5\% \\
  $20 \times 40$ & \textbf{96.5\%} & 95.2\% & 94.0\% & 83.7\% \\
  \bottomrule
  \end{tabular}
\end{table}

\section{Discussion}
\label{sec:discussion}
We proposed and evaluated \FF, a compact, symmetry-preserving, transformer-based network that produces high-quality indivisible goods allocations with minimal (but theoretically well-founded) post-processing.

\paragraph{Main takeaway: high efficiency \emph{and} EF1 at scale.}
Across a broad range of $(n,m)$, all three \FF variants consistently produce EF1 allocations (via repair) whose welfare is close to the per-instance optimum.
In aggregate, \FF attains $\approx 96$--$97\%$ of optimal Nash welfare and $\approx 95$--$96\%$ of optimal utilitarian welfare, substantially outperforming classical EF1 baselines (RR, ECE) and a strong ``greedy then repair'' baseline (MU+EF1).
Since the (integral) maximum Nash welfare allocation is EF1 under additive valuations, the Nash ratios here can be read directly as approximation quality to an \emph{EF1-feasible} optimum.

\paragraph{Why MaxUtil+repair is not enough.}
A natural approach to efficiency is to start from the utilitarian optimum and repair fairness violations.
Empirically, this strategy underperforms \FF on \emph{both} Nash and utilitarian welfare after repair.
This indicates that the repair dynamics can induce avoidable welfare loss when starting from a highly unbalanced allocation, and that learning a policy which anticipates EF1 constraints (even if enforced post hoc) yields systematically better outcomes.

\paragraph{ECE exhibits large variance and can catastrophically fail on welfare.}
ECE provides an existence/constructive EF1 guarantee, but our results show that its welfare can be extremely poor on the same i.i.d.\ additive instances where \FF remains stable.
The heavy tail (very low minima and large standard deviation) suggests that ECE's local graph-based dynamics can allocate many items in ways that are fairness-preserving yet globally inefficient, particularly as $m/n$ grows.

\paragraph{Training distribution effects are small on average, but matter for a single checkpoint.}
The three \FF training regimes are tightly clustered, with sub-$0.3$ point differences in mean welfare.
For Nash welfare, the $30{\times}60$-trained model is marginally best; for utilitarian welfare, the multi-configuration model is marginally best.
Given the small gaps, the multi-configuration model is the most robust ``single model'' choice when both welfare metrics and cross-size generalization are desired.

\paragraph{Future work.} A natural direction is to extend/modify the transformer architecture so that it can handle valuation classes beyond additive, a domain with few theoretical guarantees. Although arbitrary monotone valuations may be a tall order, certain submodular valuations such as budget-limited or capacitated additive may be amenable.





\FloatBarrier
\newpage
\bibliographystyle{ACM-Reference-Format} 
\bibliography{fairformerRefs}


\appendix
\onecolumn
\clearpage

\section*{Technical appendices to Submission XXX}

\section{Alternative formulations of EF1 and EFX conditions for additive valuations}\label{app:additive_EF1_EFX}

\begin{proposition}
    For a discrete fair instance with additive valuations, let $V$ denote the valuation matrix and $A$ denote an arbitrary allocation matrix. Suppose agent $i$ envies agent $j$ in allocation $A$. Then agent $i$ is envy-free of agent $j$ up to one item (EF1) if and only if
    \begin{align}
        \sum_{k=1}^m V_{o_k,i} A_{o_k,j} - \sum_{k=1}^m V_{o_k,i} A_{o_k,i} \le \max_{k: A_{o_k,j}=1} V_{o_k,i}.\label{app:EF1}
    \end{align}
\end{proposition}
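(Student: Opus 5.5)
The plan is to unfold the definition of EF1 from Section~\ref{sec:defs} and use additivity to turn the existential statement into a single inequality. Here $A$ is an integral allocation matrix, so $A_{o_k,j}\in\{0,1\}$ and $A_j=\{o_k : A_{o_k,j}=1\}$. Since $i$ envies $j$, we have $v_i(A_j)>v_i(A_i)\ge 0$, so $A_j\neq\emptyset$ and the maximum on the right of \eqref{app:EF1} ranges over a nonempty set and is well defined. By definition, $i$ is EF1 towards $j$ if and only if there exists $o\in A_j$ with $v_i(A_i)\ge v_i(A_j\setminus\{o\})$.

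The first step rewrites each term with additivity: $v_i(A_i)=\sum_k V_{o_k,i}A_{o_k,i}$, $v_i(A_j)=\sum_k V_{o_k,i}A_{o_k,j}$, and, for $o\in A_j$, $v_i(A_j\setminus\{o\}) = v_i(A_j) - V_{o,i}$. The last identity holds because $o\in A_j$ and the bundle is a set, so removing $o$ subtracts exactly its value once. The EF1 condition therefore becomes: there exists $o\in A_j$ with
\[
v_i(A_j)-v_i(A_i)\le V_{o,i}.
\]

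The second step notes that the left-hand side does not depend on $o$. So an $o$ satisfying the inequality exists if and only if the largest right-hand side satisfies it, that is, if and only if $v_i(A_j)-v_i(A_i)\le \max_{o\in A_j}V_{o,i}=\max_{k:A_{o_k,j}=1}V_{o_k,i}$. For the forward direction, take the witnessing $o$ and bound $V_{o,i}$ by the maximum. For the reverse direction, take $o$ to be a maximizer, which exists because $A_j$ is finite and nonempty. Substituting the sums from the first step gives exactly \eqref{app:EF1}.

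The only delicate points are bookkeeping: the maximum has to be nonempty, which follows from the envy assumption as noted above, and $A$ must be integral, so that the entries are $0/1$ and removing an item subtracts its full value. With those settled, the argument is a direct chain of equivalences.
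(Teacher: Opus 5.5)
Your proposal is correct and follows the paper's proof essentially step for step. Both arguments get nonemptiness of $A_j$ from the envy assumption and use additivity to write $v_i(A_j\setminus\{o\})=v_i(A_j)-V_{o,i}$; the forward direction then bounds the witness by the maximum, and the converse picks a maximizing item.
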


\begin{proof}
    Since $i$ envies $j$, we must have $v_i(A_j) > v_i(A_i) \ge 0$, hence $A_j$ is non-empty, i.e., $A_{o_k,j}=1$ for at least one $k \in \{1,2,\dots,m\}$. Hence, the maximum in Inequality~\eqref{app:EF1} is well-defined.
    
    
    It suffices to show that the following general expression for the EF1 condition is equivalent to Inequality~\eqref{app:EF1} in our setting:
    \begin{align}
        \exists o \in A_j . v_i(A_i) \ge v_i(A_j \setminus \{o\}). \label{ineq:EF1_gen}
    \end{align}
  To show that \eqref{ineq:EF1_gen} $\Rightarrow$ \eqref{app:EF1}, we pick an item $o_l \in A_j$ that satisfies the inequality in \eqref{ineq:EF1_gen}. Then, under additive valuations, the LHS of this inequality is $\sum_{k=1}^m V_{o_k,i} A_{o_k,i}$ and the RHS is
  \begin{align*}
       v_i(A_j \setminus \{o_l\}) = \sum_{o\in A_j\setminus\{o_l\}} v_i(o) = \sum_{o\in A_j} v_i(o) - v_i(o_l) = v_i(A_j) - v_i(o_l) =  \sum_{k=1}^m V_{o_k,i} A_{o_k,j} - V_{o_l,i}.
  \end{align*}
Plugging these expressions back and rearranging, we get
\[ \sum_{k=1}^m V_{o_k,i} A_{o_k,j} - \sum_{k=1}^m V_{o_k,i} A_{o_k,i} \le V_{o_l,i} \le \max_{k: A_{o_k,j}=1} V_{o_k,i},\]
since $A_{o_l,j}=1$ by our definition of $o_l$.

To show that \eqref{app:EF1} $\Rightarrow$ \eqref{ineq:EF1_gen}, we simply pick an item $o_l$ such that $l \in \arg\max_{k: A_{o_k,j}=1} V_{o_k,i}$. Obviously, $o_l \in A_j$ since $A_{o_l,j}=1$ by definition; we then mirror the above steps to get $v_i(A_i) \ge v_i(A_j \setminus \{o_l\})$.
\end{proof}

\section{\FF configurations}
\label{app:config}

\begin{table}[h]
\centering
\caption{Model configurations used in experiments.}
\label{tab:model-configs}
\begin{tabular}{lccccc}
\hline
\textbf{Model} &
$\mathbf{d_{\text{model}}}$ &
$\mathbf{\#\;heads}$ &
$\mathbf{\#\;encoder\;layers}$ &
$\mathbf{\#\;output\;layers}$ &
\textbf{Dropout} \\
\hline
10$\times$20 Model        & 256 & 8 & 1 & 2 & 0.0   \\
30$\times$60 Model        & 128 & 8 & 3 & 2 & 0.099 \\
Multi-Configuration Model    & 256 & 8 & 1 & 2 & 0.0   \\
\hline
\end{tabular}
\end{table}

\FloatBarrier

\section{Valuation distributions used for evaluation }\label{app:val_dist}
 \begin{table}[h]
  \small
  \centering
  \caption{Valuation distributions and their characteristics.}
  \label{tab:distributions}
  \setlength{\tabcolsep}{4pt}
  \begin{tabular}{@{}lll@{}}
  \toprule
  \textbf{Distribution} & \textbf{Key Challenge} & \textbf{Real-World Analog} \\
  \midrule
  Uniform & Baseline (no structure) & Random preferences \\
  Pareto & Heavy tails, high-value outliers & Auctions, resource allocation \\
  Correlated & Competition for shared items & Course allocation, job markets \\
  \bottomrule
  \end{tabular}
  \end{table}
\FloatBarrier

\section{Additional experimental details and results}
\label{app:additional_results}

This appendix provides complementary views of the welfare results in the main text.
All plots use the same normalization as in Section~\ref{sec:results}: for each instance, we report Nash welfare and utilitarian welfare as a percentage of the instance-wise optimum, and then aggregate across instances and/or problem configurations.

\paragraph{Distributional stability across configurations.}
Figure~\ref{fig:welfare_boxplots} visualizes the distribution of welfare ratios (across all evaluated $(n,m)$ configurations).
The learned \FF variants exhibit tightly concentrated boxplots, indicating that their strong mean performance is not driven by a small subset of easy instances.
In contrast, Envy Cycle Elimination (ECE) exhibits heavy left tails and wide interquartile ranges, consistent with the large variance and low minima reported in Table~\ref{tab:welfare_summary}.

\paragraph{Scaling with total problem size.}
Figure~\ref{fig:nash_vs_size} plots normalized Nash welfare against the total problem size $(n+m)$.
Across the evaluated range, the \FF variants remain near-flat, suggesting that the symmetry-aware architecture and Nash-welfare training objective transfer smoothly across scale.
ECE shows pronounced instability as size increases, reflecting that EF1 feasibility alone does not prevent severe welfare degradation.

\paragraph{Cross-size generalization within \FF.}
Figure~\ref{fig:heatmaps_models_only} shows heatmaps of normalized Nash welfare over the $(n,m)$ grid for each training regime.
Although each model is trained on a restricted set of sizes (single-size training for $10{\times}20$ and $30{\times}60$, versus multi-configuration training),
all three achieve strong performance well outside their training sizes.
The heatmaps provide a fine-grained view of where each training regime is marginally strongest, complementing the pooled summary statistics in the main paper.

\paragraph{Statistical testing details.}
Figure~\ref{fig:significance_tables} summarizes paired comparisons of normalized welfare ratios between methods on the same evaluation instances.
We report mean paired differences with 95\% confidence intervals, standardized effect sizes (Cohen’s $d$), and $p$-values from both paired $t$-tests and Wilcoxon signed-rank tests.
Because the evaluation set is large, very small differences (including among \FF variants) can be statistically significant;
the practically meaningful comparisons are the consistently larger gaps between \FF and the classical EF1 baselines.

\begin{figure*}[t]
  \centering
  \includegraphics[width=\textwidth]{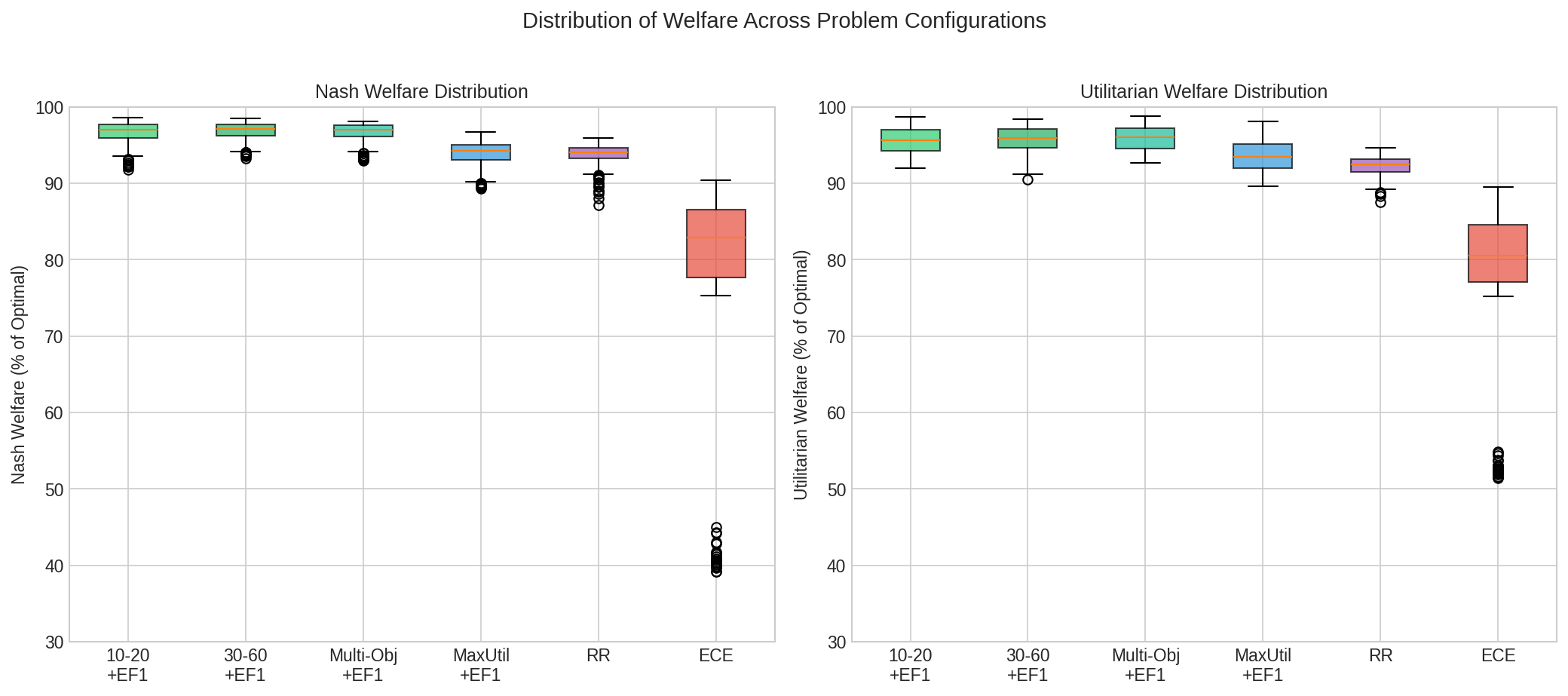}
  \caption{Distribution of welfare ratios across problem configurations. Boxplots visualize stability: learned models are tightly concentrated, while ECE exhibits heavy tails and large variance.}
  \label{fig:welfare_boxplots}
\end{figure*}

\begin{figure*}[t]
  \centering
  \includegraphics[width=\textwidth]{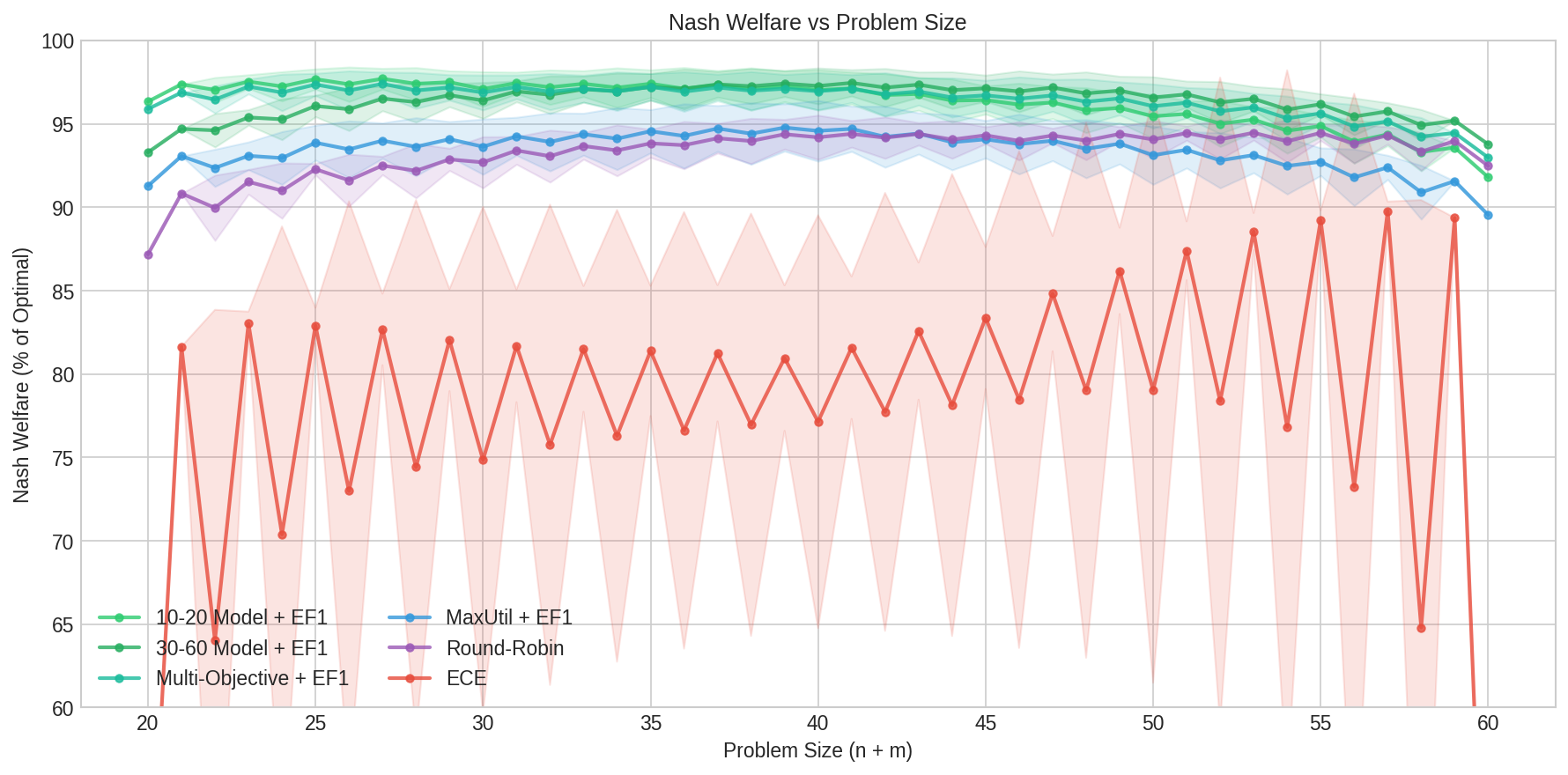}
  \caption{Nash welfare versus total problem size $(n+m)$. Learned models are robust across the evaluated scale range, while ECE shows pronounced instability.}
  \label{fig:nash_vs_size}
\end{figure*}

\begin{figure*}[t]
  \centering
  \includegraphics[width=\textwidth]{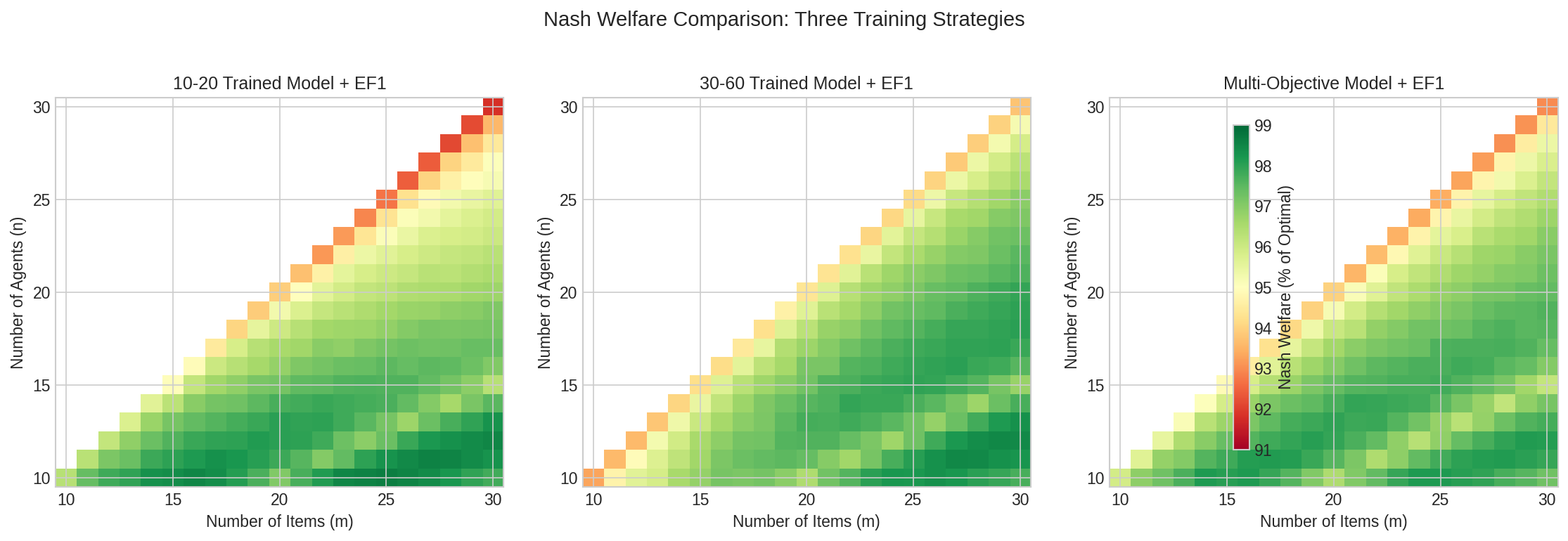}
  \caption{Nash welfare heatmaps for the three \FF training strategies across $(n,m)$, highlighting cross-size generalization. Color indicates Nash welfare as a percentage of the per-instance optimum.}
  \label{fig:heatmaps_models_only}
\end{figure*}

\begin{figure*}[p]
  \centering
  \includegraphics[width=0.95\textwidth]{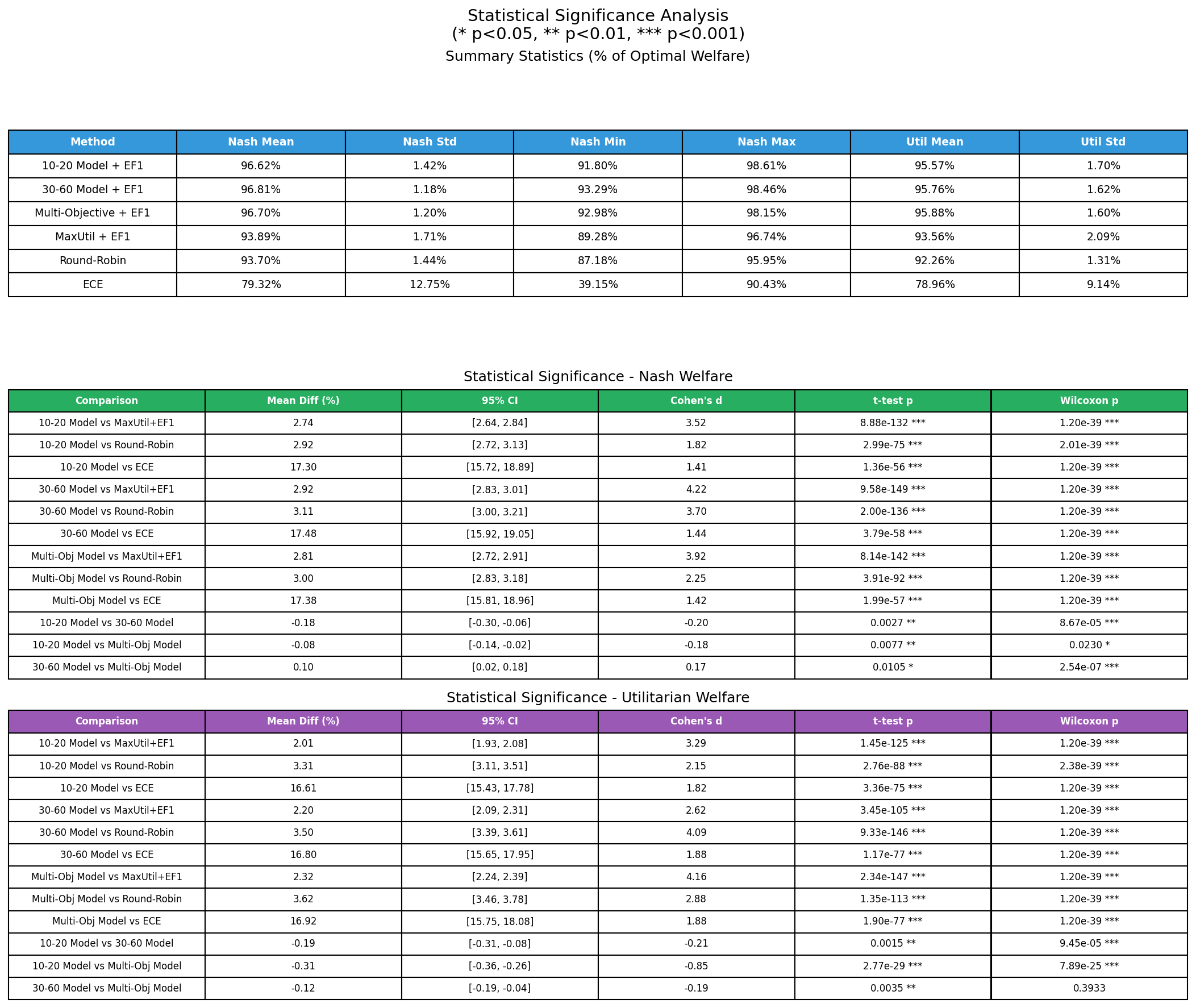}
  \caption{Summary statistics and pairwise significance tests for welfare ratios. We report mean differences, confidence intervals, effect sizes, and $p$-values for Nash and utilitarian welfare.}
  \label{fig:significance_tables}
\end{figure*}

\begin{table}[t]
\centering
\small
\begin{tabular}{lccccc}
\toprule
Algorithm & $10{\times}10$ & $10{\times}15$ & $10{\times}20$ & $20{\times}40$ & $20{\times}60$ \\
\midrule
\FF\ (10--20) & $0.81 \pm 0.39$ & $0.51 \pm 0.54$ & $0.78 \pm 0.50$ & $1.58 \pm 0.52$ & $1.42 \pm 0.49$ \\
\FF\ (30--60) & $0.99 \pm 0.07$ & $0.98 \pm 0.56$ & $1.08 \pm 0.52$ & $1.26 \pm 0.45$ & $1.19 \pm 0.47$ \\
\FF\ (Multi)  & $0.94 \pm 0.23$ & $0.77 \pm 0.52$ & $0.97 \pm 0.47$ & $1.40 \pm 0.49$ & $1.25 \pm 0.43$ \\
MaxUtil       & $1.00 \pm 0.00$ & $1.23 \pm 0.43$ & $1.36 \pm 0.52$ & $1.80 \pm 0.44$ & $1.56 \pm 0.52$ \\
\bottomrule
\end{tabular}
\caption{EF1-Repair passes until convergence (mean $\pm$ std). A value of 0 indicates the initial allocation already satisfied EF1 and no transfers were needed.}
\label{tab:ef1repair_passes_mean}
\end{table}

\begin{table}[t]
\centering
\small
\begin{tabular}{lccccc}
\toprule
Algorithm & $10{\times}10$ & $10{\times}15$ & $10{\times}20$ & $20{\times}40$ & $20{\times}60$ \\
\midrule
\FF\ (10--20) & 1/1/1/1 & 0/1/2/2 & 1/1/2/2 & 2/2/3/3 & 1/2/2/2 \\
\FF\ (30--60) & 1/1/1/1 & 1/2/2/3 & 1/2/2/2 & 1/2/2/3 & 1/2/2/3 \\
\FF\ (Multi)  & 1/1/1/1 & 1/1/2/2 & 1/1/2/2 & 1/2/2/2 & 1/2/2/2 \\
MaxUtil       & 1/1/1/1 & 1/2/2/2 & 1/2/2/3 & 2/2/3/3 & 2/2/3/3 \\
\bottomrule
\end{tabular}
\caption{EF1-Repair passes: percentiles (50th / 90th / 99th / max) across instances.}
\label{tab:ef1repair_passes_pct}
\end{table}

\FloatBarrier

\end{document}